\DeclarePairedDelimiter\ceil{\lceil}{\rceil}
\DeclarePairedDelimiter\floor{\lfloor}{\rfloor}
\newtheorem{theorem}{Theorem}
\newtheorem{property}{Property}
\title{Blockchain Scalability and Security: Communications Among Fast-Changing Committees Made Simple}
\begin{document}
\author{
\IEEEauthorblockN{Andrea Mariani}
\IEEEauthorblockA{\textit{Engineering Dept.} \\
\textit{Roma Tre University}\\
Roma, Italy \\
\small and.mariani3@stud.uniroma3.it
}
\and
\IEEEauthorblockN{Gianluca Mariani}
\IEEEauthorblockA{\textit{Engineering Dept.} \\
\textit{Roma Tre University}\\
Roma, Italy \\
\small gia.mariani1@stud.uniroma3.it
}
\and
\IEEEauthorblockN{Diego Pennino}
\IEEEauthorblockA{\textit{Engineering Dept.} \\
\textit{Roma Tre University}\\
Roma, Italy \\
\small pennino@ing.uniroma3.it\\
0000-0001-5339-4531}
\and
\IEEEauthorblockN{Maurizio Pizzonia}
\IEEEauthorblockA{\textit{Engineering Dept.} \\
\textit{Roma Tre University}\\
Roma, Italy \\
\small pizzonia@ing.uniroma3.it\\
0000-0001-8758-3437}
}
\maketitle

\begin{abstract}
For permissionless blockchains, scalability is paramount. While current
technologies still fail to address this problem fully, many research works
propose sharding or other techniques that extensively adopt parallel processing
of transactions. In these approaches, a potentially large number of committees
of nodes independently perform consensus and process new transactions. Hence, in
addition to regular intra-committee communication, (1) new transactions have to
be delivered to the right committee, (2) committees need to communicate to
process inter-shard transactions or (3) to exchange intermediate results. To
contrast slowly adaptive adversaries, committees should be frequently changed.
However, efficient communication to frequently-changing committees is hard.

We propose a simple approach that allows us to implicitly select committee
members and effectively deliver messages to all members of a specific committee, even
when committees are changed frequently. The aim of our design is to provide a
committee selection procedure and a committee-targeted communication primitive
to be applied in most of the scalable blockchain architectures that are
currently proposed in literature. We provide a theoretical proof of the security
of our approach and first experimental results that shows that our approach might be feasible in practice.

\end{abstract}

\vspace{0.2cm}
\begin{IEEEkeywords}
blockchain, scalability, consensus, sharding, kademlia.
\end{IEEEkeywords}

\section{Introduction}

Permissionless blockchains allow any subject to join the network, either for
helping with transaction processing or for producing new workload. Clearly,
scalability is a fundamental aspect of them.

Many research contributions focus on the scalability of consensus protocols
(e.g., \cite{chen2019algorand,kiayias2017ouroboros,8123011, 8972381, oyinloye2021blockchain}). However,
scalability is not only related to consensus. Supposing each node in the network has the same limited amounts of
CPU power, storage and network bandwidth, the scalability of current blockchains 
technologies is severely limited by the fact that (accepted or candidate) transaction are broadcasted and that each node involved in consensus have to process each transaction and to store it (or its effect).

To overcome this fundamental problem the \emph{sharding} technique has been
proposed. The nodes of the sharded network are partitioned into \emph{shards}.
Each shard manages its own blockchain which stores a subset of the transactions/state of the whole system. In each shard, transactions are accepted by a
\emph{committee} of nodes, which in general may be a subset of the nodes of the
shard and may change over time. Transactions may involve addresses belonging to more shards hence inter-shard transactions should be handled by some sort of inter-committee communication. See also~\cite{zamani2018rapidchain, kokoris2018omniledger, dang2019towards, wang2019sok, 8954616, hashim2023sharding}.

Another approach is proposed in~\cite{DBLP:journals/corr/abs-2005-06665}, where
the blockchain is only one but several randomly selected committees cooperates
to obtain the set of accepted transaction of each block and to obtain its header
which is the only data sent in broadcast. In that work, committees cooperation is
inspired to the pipeline processing technique similarly to what occurs in modern
CPUs, where each committee executes a stage of the pipeline. Further, the first
stages of the pipeline also share the load among several committees.

In both approaches, if nodes and workload increase proportionally, scalability
is easily achieved by increasing the number of shards or committees that are active at the
same time.

For security reason, committees involved in consensus should change frequently.
However, it is not clear how inter-committee communication is achieved in this
case. In fact, we cannot use broadcast, for scalability reasons, and multicast
techniques need to rely on some configured routing, which is hard to update
efficiently when committees change.

In this paper, we show an approach for achieving both inter-committee
communication and high frequency of committee changes. Our solution has a number
of properties that makes it suitable for its adoption within the architectures
that have scalability as their primary concern, like, for
example,~\cite{DBLP:journals/corr/abs-2005-06665, zamani2018rapidchain}. Our
approach is based on kademlia but kademlia routing table does not have to be updated
when committee is changed. To send a message to all members of a committee, the
sender needs to know only a limited amount of pseudo random numbers called
\textit{centers} that are used to select the committee members. Committees
members are selected close to the centers and it is easy to verify if a certain
node belongs to a committee. We prove that security of our approach quickly
increases when the number of centers increases. Further, we experimentally show
that the probability of a successful attack quickly goes to zero even with small number of centers.
This is a desirable property since the efficiency of the communication protocol is affected by
the number of centers.

The structure of the paper is the following. 
In Section~\ref{sec:SoA},  we quickly review some related literature.
In Section~\ref{sec:requirements}, we state the problem and the requirements with basic definitions and assumptions. 
In Section~\ref{sec:solution}, we present our solution detailing procedures for committee selection and committee-targeted communication.
In Section~\ref{sec:robustness-analysis}, we show that committees contains enough members with high probability. 
In Section~\ref{sec:security-analisys}, we show a formal security analysis. 
In Section~\ref{sec:experiments}, we show our first experimental results.
In Section~\ref{sec:conclusions}, we draw the conclusions and we propose several research directions.

\section{State of the Art}\label{sec:SoA}

Recent blockchains that aim at low latencies are committee-based, in the sense that a
selection of nodes runs a consensus algorithm with deterministic finality (e.g.,
\cite{EOSIO14:online, chen2019algorand}). To comply with
high standards of security and decentralization the committee should change very
frequently: for example Algorand~\cite{chen2019algorand} changes its
committee after an interval whose duration is in the order of seconds.

When scalability is the main concern, more than one committee is active at the
same
time~\cite{luu2016secure,kokoris2018omniledger,zamani2018rapidchain,DBLP:journals/corr/abs-2005-06665} so that workload can be shared in some way over several committees. In this setting, a routing system to reach a specific committee is usually
needed. Since routing information takes time to propagate, changing the
committee frequently is hard. For example,
RapidChain~\cite{zamani2018rapidchain}, one of the most famous sharding
proposals, allow changes of small parts of each committee only every 24 hours. It is likely that this
interval of time can be greatly reduced without hurting functionality, however,
it should be well above average propagation time for routing information.

\textit{Sharding} is the most promising approach for creating scalable distributed ledgers.
Transactions and blockchain state are the partitioned into
multiple shards managed in parallel by different subsets of validators. However, handling 
\textit{inter-shard transactions} is problematic.
Typically, this introduces communication inefficiencies between shards and the
need for techniques similar to a \textit{two-phase commit} to ensure that
transaction executions are atomic. For example, in OmniLedger
\cite{kokoris2018omniledger} inter-shard transactions are handled by clients
acting as gateways to support message passing between shards (this means there
is no direct communication between shards). However, this overloads the clients
too much, violating the principles of light clients. Additionally, the clients
need to broadcast the transactions to all network.

RapidChain~\cite{zamani2018rapidchain} introduces an inter-committee routing
protocol. Each committee is in charge of a shard. Each committee and
transactions have an ID. Each committee $C$ deals with all transactions that
have the ID of $C$  as  prefix of their ID. The node that receive a transaction
sends it to the responsible committee through the inter-committee routing
protocol. The routing algorithm is inspired by Kademlia
\cite{maymounkov2002kademlia}. In particular, each node stores information about
all members of the committee it belongs to and information about only a small
sample of nodes ($O(\log \log n)$) from the other $O(\log n)$ committees whose
ID is XOR close to the ID of the committee it belongs to. Then, based on the
prefix ID of the transaction, a node will send the transaction to the closest
committee (i.e, to the  few nodes which he knows in that committee) to the
responsible committee that he knows. Hence, routing is done at committee level
and it gets close the right committee in $O(\log n)$ steps. RapidChain varies
their committees infrequently and only in part, since committee change is
inefficient with respect to network load.

\section{Requirements} \label{sec:requirements}

In this section, we state the problem and the requirements that our solution should comply with.

We assume to have a universe $U$ of
\emph{potential committee members} whose cardinality is $N=|U|$. \emph{Committees} are subsets of $U$ denoted
$C_i$, with $1<i<K$, with $K$ the number of committees. Their cardinality
$|C_i|$ is independent of $|U|$ and, ideally, it is constant, In practice, we
can accept it to be a random variable with certain properties (see the
\emph{robustness} requirement). To scale, $K$ may grow
linearly with $|U|$. We call \emph{round} a period of time in which each
committee performs its task to reach a consensus on some result. The specific
task and result is not relevant for this paper: it might be the proposal of the
next block or a stage of a pipeline as
in~\cite{DBLP:journals/corr/abs-2005-06665}. At the end of the round all
committees are changed. Rounds are numbered, hence for each round $r$ each
committee is denoted $C_{i,r}$.

We look for a solution comprising two elements:
\begin{itemize}

\item a committee selection algorithm with certain security properties (stated
in Section~\ref{sec:requirements}), 

\item a routing mechanism that allows to reach all members of each committee
$C_{i,r}$ without the need to update any routing information.

\end{itemize}

We consider the following three use cases.

\begin{enumerate}

\item \label{uc:transactions} Communication of a pending transaction to the committee $C_{i,r}$ that is
in charge to process it in round $r$, where $i$ is obtained by the transaction
itself with a function that it is not important in this context (e.g., a hash
function).

\item  Communication within each committee to perform a consensus algorithm.

\item \label{uc:pipeline} Communication of an intermediate result from a committee $C_{i,r}$ to a
committee $C_{j,r'}$, where $r<r'$, to be used in pipeline-based architectures
like in~\cite{DBLP:journals/corr/abs-2005-06665}.

\end{enumerate}

Since the scalability objective implies that transactions should not be
broadcasted, we have the inherent constraint that the destination committee
should be known when message is sent. Hence, for use case~\ref{uc:pipeline},
$C_{j,r'}$ should be reachable $r'-r$ rounds earlier with respect to when
$C_{j,r'}$ actually will perform consensus. Also in use
case~\ref{uc:transactions}, there is a similar constraint, in the sense that
transactions to be processed by a committee should be sent in advance. This
observation is tightly linked with our threat model.

\subsection{Threat Model}\label{threat-model}

Any consensus protocol that tolerates Byzantine faults is reliable up to a
certain fraction of Byzantine members. In other words, to subvert a consensus
instance run by a committee $C$ the attacker should control at least a certain fraction of $C$. 
In a \textit{Sybil attack}, the attacker creates a huge number of
fictitious \emph{sybil-nodes}, making high its probability to control any
randomly chosen committee. We assume that an opponent can control only a limited
number $m$ of sybil-nodes, with $m$ independent of the number $N$ of potential
committee members. The logic behind this assumption is the possibility to defend
against Sybil attacks imposing some sort of cost for each
sybil-node (e.g., like in proof-of-stake or proof-of-work~\cite{urdaneta2011survey} approaches) and assuming that the attacker can bear
only a limited cost independent of $N$.

We assume a slowly-adaptive adversary, i.e., the period of time required for the
adversary to change the set $m$ of controlled nodes to match a selected
committee is longer than the period of time between the instant when a committee
is known and the instant when the committee reaches consensus. Clearly, this model
is as much sensible as the frequency at which committees change is high and
$r'-r$ is low.

If an attacker were to succeed in taking control of a committee, subversion of
the regular consensus rules are possible. Further, in our communication related
setting, the threats are the followings:
\begin{itemize}
    \item \textbf{Message forging}: the attacker can decide the message to be sent or pretend that a certain message was received.
    \item \textbf{Denial of service (DoS)}: messages are not received or sent.
\end{itemize}

\subsection{Committee Selection}\label{sec:requirements:selection}

The committee selection procedure must satisfy the following requirements:
\begin{itemize}

    \item \textbf{Unpredictability.} Members of selected committees should be
    hard to predict before the instant $r$ in which the
    committee is selected (see use cases~\ref{uc:transactions} and~\ref{uc:pipeline} described above). Further, it should be hard for the attacker to steer
    the selection toward the nodes it already controls. 
    
    \item \textbf{Uniformity.} The probability for a subject to be selected as a
    committee member should be uniform across all potential committee members.
    Otherwise, an attacker could try to control the subjects that have more
    probability to be part of a committee.

    \item \textbf{Verifiability.} Given a message, anyone should be able to determine if the sender of the message is a legitimate member of a certain committee.
 
    \item \textbf{Robustness.} The desired committee size $k$ should be a
    parameter given to the selection procedure. Let the actual size be $k'$.
    Ideally, we want $k'=k$,  however, in practice, we can accept $k' \in
    [k,\alpha k]$ (with $\alpha>1$). We can also have a randomized approach, provided that the
    probability of $k'<k$ is negligible. Note that Having $k'>k$ does not affect
    security, but a large $k'$ may affect efficiency. Tuning $\alpha$ can help
    us to strike a good trade-off.

\end{itemize}

The robustness requirement is inspired by a similar requirement adopted
in~\cite{pennino2021efficient}.

\subsection{Communication}

Broadcasting all messages greatly simplifies communication but severely limits
scalability. On the other hand, in traditional unicast messaging (1) the sender
have to know the destination address and (2) a routing mechanism (and
corresponding routing tables) should be available. Since, we aim at changing
committee frequently and at involving most of the nodes in at least one
committee, explicitly publishing all committees to all nodes is not feasible.
Further, adapting routing when committee changes is not feasible either.

We aim at a communication protocol with the following requirements.

\begin{itemize}

    \item \textbf{Routing-by-committee-ID.} The destination committee of a
    message should be expressed as a small committee identifier. Committee identifiers
    should be pairs $\langle i,r \rangle$, where $r$ is the \emph{round index}
    and $i \in [0,K)$ integer, identifies each committee in round $r$.  No
    explicit publishing of committee members should be performed.
    
    \item \textbf{No-rerouting.} No routing information propagation should be needed when committees change.
    
    \item \textbf{Log-size.} $O(\log N)$ routing table size.
     
    \item \textbf{Log-time.} $O(\log N)$ time to route a message to a committee. 
\end{itemize}

Note that property routing-by-committee-ID implies that partitioning the load
related to processing pending transactions is easy. In fact, suppose a
transaction $x$ should be assigned to a committee $i$ in round $r$, this can be
done by defining an arbitrary (non-cryptographic) hash function $f_h$ from $x$
to integers in $[0,K)$, then $x$ can be sent to the committee identified by
$\langle f_h(x),r \rangle$ (see also Section~\ref{communication}).

\section{Solution} \label{sec:solution} 

In this section, we describe a committee
selection procedure and a methodology for inter-committee communication that
meet the requirements described in Section \ref{sec:requirements}. Our approach
leverages the kademlia peer to peer network described
in~\cite{maymounkov2002kademlia} and adapts it to our needs.

In our approach, each node are part of a kademlia-like peer-to-peer network.
While kademlia was designed for data retrieval, we use it for a specific kind of
multicast routing. In kademlia, each node has an identifier called \emph{kad-ID}, or
\emph{KID}, which can be considered random numbers (for honest nodes) from 0 to $2^b-1$
(where $b$ is the number of bits of their binary representation). We call
\emph{KID space} the set of possible KIDs. Often $b=160$, however, theoretically, $b$ is more
correctly chosen to be $O(\log N)$ so that the average density of
used KIDs in the space is constant when $N$ grows. Usually, in a blockchain,
each node is associated with a public/private key pair and we assume its KID is
a cryptographic hash function of its public key.

In kademlia, the space of the KIDs is a metric space. The distance $d(n_1, n_2)$
between two KIDs $n_1$ and $n_2$ is defined as the XOR operation between the
binary representation of $n_1$ and $n_2$ interpreted as an integer number. We
refer to this as the \emph{xor metric}. For simplicity, in the following, we
refer with the same symbol to both a node and its KID.

Kademlia is able to find the IP address of nodes close to a certain KID value in
time $O(b)$, i.e., $O(\log N)$, with routing table size for each node also
$O(\log N)$. Details can be found in~\cite{maymounkov2002kademlia}.

The xor metric has the following property.

\begin{property}\label{pr:unique-at-distance}
Let $d(\cdot, \cdot)$ be the xor metric, $v$ a KID, and $q$ an integer,
there exist only one KID $u$ such that $d(u,v)=q$.
\end{property}

Which implies the following.

\begin{property}\label{pr:count-at-distance}
Let $d(\cdot, \cdot)$ be the xor metric, $v$ a KID, and $q$ an integer,
there number of KIDs $u$ such that $d(u,v)<q$ are $q$.
\end{property}

\subsection{Committee Selection Procedure} \label{selection-procedure}

We now describe the procedure to randomly select a committee with identifier
$\langle i, r \rangle$. 

We assume $R_r$ to be an unbiased random number available during round $r$. If
the blockchain is one, as in~\cite{DBLP:journals/corr/abs-2005-06665}, and the
network is so big that transactions in the block are hard to control by any
attacker, $R_r$ can simply be the hash of the block produced in round
$r-1$. For simplicity we assume instantaneous diffusion of this hash. For
sharding-based blockchains, the sequence of $R_r$ can be produced by pseudo
random generation periodically seeded with an unbiased random number produced in
a decentralized manner by a \emph{reference committee}. For example,
RapidChain~\cite{zamani2018rapidchain} adopts \emph{verifiable secret
sharing}~\cite{feldman1987practical} to do so.

We now describe the selection procedure. Suppose that, for a certain blockchain
technology, committees that will be active at round $r$ have to be contacted no
more than $e$ rounds earlier (i.e., at round $r-e$). Members of a committee
$C_{i,r}$ are selected in the following way. Consider $\gamma$ pseudo-random
values $v^j_{i,r} = \mathrm{hash}( R_{r-e} , i, j)$ with $\gamma$ integer, $j$
in $1,\dots,\gamma$, $v^j_{i,r}$ in the KID space, and $\mathrm{hash}(\cdot)$ a
cryptographic hash function. For simplicity, in the following, we omit the
subscript, hence, we define $v^j=v^j_{i,r}$ and $C=C_{i,r}$.

We create $C$ as the
union of $\gamma$ sets of nodes, i.e., $C = \bigcup_j C^j $, each
depending on the corresponding $v^j$.

We denote as $ \beta = \frac {k(1+\alpha)} {2\gamma} $  the desired expected
value of the number of nodes in $C^j$, where $k$ is the minimum number of nodes
for a committee and $\alpha>1$ is a parameter that can be tuned to make the
selection robust. The rationale for this formula is that $(k+\alpha k)/2$ is the
middle point of range $[k,\alpha k]$ of the desired size for a committee to
achieve the robustness property described in
Section~\ref{sec:requirements:selection}. The formula considers that the
contribution to each committee is provided by $\gamma$ spots. Robustness is
proven in Section~\ref{sec:robustness-analysis}.

We define $C^j = \{ n \in U : d(n,v^j)<\frac{2^b}{N} \beta\}$. We call
\emph{spot} associated with $v^j$ both $C^j$ and the set of KIDs within distance
$\frac{2^b}{N} \beta$ from $v^j$. The actual meaning will be clear from the
context. We call $v^j$ the \emph{center} of $C^j$ and we say that $C^j$ is the spot
of $v^j$.

Further, for security reasons, we impose a limit $l= k/\gamma^\rho$ (with $0<\rho<1$) so that
$|C^j| \leq l $ , in the sense that only the $\floor*{ l }$
nodes closer to $v^j$ are taken to be in $C$. The idea is that an attacker
cannot accumulate all its bad nodes in a certain zone in the hope that a single
spot overlaps all of them. On the contrary, the more spots we have the more the 
attacker have to control a larger number of them at the same time. 
This reduces the probability of a successful attack.

\subsection{Communication}\label{communication}

An honest node $u$ that intends to send a message to all members of committee
$C_{i,r}=\bigcup^\gamma_j C_{i,r}^j$ can autonomously compute values
$v^j_{i,r}$. To simplify the notation, we omit subscripts $i,r$. By construction,
members of each spot $C^j$ are those within a certain distance from  $v^j$ and
if they are more than $\floor*{l}$, only the closest ones are in $C^j$. 

In our approach, we adopt the well known kademlia iterative search to deliver
the message to one or more members of each spot~\cite{maymounkov2002kademlia},
then gossiping is used to diffuse the message within the spot.

In detail, to deliver a message to all memebers of $C$, $u$ starts $\gamma$
kademlia searches where the $j$-th search is targeted to $v^j$ (with
$j=1,\dots,\gamma$). The kademlia routing system is designed to reach nodes that
are close to a certain KID value. Consider any kademlia node $n$, we also denote
$n$ its KID. One notable property of the routing table kept by $n$ is that it
stores the KIDs of a set of nodes whose density is exponentially decreasing as
the distance from $n$ increases. This means that $n$ knows the nodes close to it
with great precision. When a node $n' \in C^j$ is reached by the $j$-th search,
$u$ can deliver the message to it by the underlying network. By the properties
of the kademlia search and  kademlia routing tables, and by the construction of
each spot $C^j$, it is very likely that the message can be gossiped from $n'$ to
all other honest members of $C^j$. In fact, all members $C^j$ knows $v^j$ and
can easily select the nodes in their kademlia routing table that are part of
$C^j$ (by verifiability requirement, see Section~\ref{selection-procedure}) and
can use this information to execute the gossip protocol within $C^j$.

Requirement Routing-by-committee-ID is fulfilled by construction. Requirement
no-rerouting is fulfilled since routing information depends only on node KIDs
and not on committees selection. Requirements Log-size and
Log-time are fulfilled since they are properties of the kademlia routing scheme.

The described approach works well for one-shot communications (e.g., for
submitting a transaction to a committee). However, during consensus, members of
the same committee $C= \bigcup^\gamma_j C^j$ should communicate several times
and performing kademlia search for every inter-spot communication is too
cumbersome. To speed up communications during consensus, all members of each
spot $C^j$ send to all other spots the list of the members of $C^j$. This is
done before consensus algorithm starts. When this information is received, it is
gossiped within the spot as above. After this preparation phase, the consensus
can be executed at the speed of the underlying network connection. Note that,
this preparation can be executed in parallel with the reception and processing
of the data needed to be processed before the execution of the consensus
algorithm. Hence, it affects network load but not latency.

Clearly the efficiency of our communication protocol badly depends on the number 
of spots adopted for each committee. In Section~\ref{sec:experiments}, we show that 
the probability of a successful attack quickly decrease even with small values of $\gamma$.

\section{Robustness Analysis} \label{sec:robustness-analysis}

We formally state the robustness of our approach in the following theorem. 

\begin{theorem}[Robustness]
    Consider the selection of a committee $C$ form a universe $U$ of potential
    members. The selection procedure described in Section~\ref{selection-procedure}
    with desired committees size $k$ and a parameter $\alpha>1$, is robust
    in the sense that it is possible to arbitrarily reduce the probability of having
    $|C|<k$ by increasing $\alpha$, if the number of potential committee members $N=|U|$ is large. 
\end{theorem}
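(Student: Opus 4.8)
The plan is to replace the hash-derived quantities by true randomness and then reduce the event $|C|<k$ to two rare events, one killed by taking $N$ large and one by taking $\alpha$ large. Concretely, I would model the KID of each honest node as independent and uniform over the KID space $\{0,\dots,2^b-1\}$, and (using that $R_{r-e}$ is unbiased and modelling $\mathrm{hash}$ as producing independent uniform values) model the $\gamma$ centers $v^1,\dots,v^\gamma$ as independent, uniform, and independent of the node KIDs. Write $B^j=\{\,n\in U:d(n,v^j)<\tfrac{2^b}{N}\beta\,\}$ for the \emph{uncapped} spot, so that $C^j$ consists of the $\floor*{l}$ nodes of $B^j$ closest to $v^j$, i.e.\ $|C^j|=\min(|B^j|,\floor*{l})$ and $C=\bigcup_j C^j$. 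By Property~\ref{pr:count-at-distance} the ball of radius $\tfrac{2^b}{N}\beta$ contains $\tfrac{2^b}{N}\beta$ KIDs, so each node lies in $B^j$ with probability $\beta/N$; hence $|B^j|\sim\mathrm{Binomial}(N,\beta/N)$ and each spot has expected size $\beta$.

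First I would rule out overlaps between distinct spots. For $j\neq j'$ a node lies in $B^j\cap B^{j'}$ with probability $O((\beta/N)^2)$, so $B^j\cap B^{j'}$ has expected size $O(\beta^2/N)$; a union bound over the $\binom{\gamma}{2}$ pairs together with Markov's inequality shows that the event $D$ that the $B^j$ are pairwise node-disjoint satisfies $\Pr[\overline{D}]=O(\gamma^2\beta^2/N)$, which tends to $0$ as $N\to\infty$ for any fixed $\alpha$. On $D$ we have the exact identity $|C|=\sum_j|C^j|=\sum_j\min(|B^j|,\floor*{l})$, so the cap — which can only \emph{remove} nodes — now works in our favour: if every spot is \emph{full}, meaning $|B^j|\ge\floor*{l}$ for all $j$, then on $D$ we get $|C|=\gamma\floor*{l}\ge k$, using $\gamma l=k\gamma^{1-\rho}\ge k$ (valid in the natural parameter ranges; the only caveat is integrality of $\floor*{l}$, which may require a mild side condition on $\gamma,\rho,k$).

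It then remains to bound $\Pr[\exists j:|B^j|<\floor*{l}]$. Here I would choose $\alpha$ large enough that $\beta=\tfrac{k(1+\alpha)}{2\gamma}\ge 2\floor*{l}$, i.e.\ $1+\alpha\ge 4\gamma^{1-\rho}$, and apply a Chernoff lower-tail bound to the Binomial $|B^j|$: $\Pr[|B^j|<\floor*{l}]\le\Pr[|B^j|<\beta/2]\le e^{-\beta/8}$. A union bound over the $\gamma$ spots then gives $\Pr[\exists j:|B^j|<\floor*{l}]\le\gamma e^{-\beta/8}$, and therefore $\Pr[|C|<k]\le\Pr[\overline{D}]+\gamma e^{-\beta/8}=O(\gamma^2\beta^2/N)+\gamma e^{-\beta/8}$ whenever $\alpha\ge 4\gamma^{1-\rho}-1$. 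Given any target $\varepsilon>0$, I would first fix $\alpha$ so that $\gamma e^{-\beta/8}<\varepsilon/2$ (possible since $\beta$ grows linearly in $\alpha$), and then take $N$ large enough that the overlap term drops below $\varepsilon/2$; this yields $\Pr[|C|<k]<\varepsilon$, which is exactly the claimed robustness.

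I expect the genuine difficulties to be: (i) justifying the idealized model, i.e.\ arguing that using the actual hash outputs and $R_{r-e}$ in place of true randomness does not change the relevant probabilities appreciably — this is where the cryptographic assumptions of Section~\ref{threat-model} enter; (ii) the integrality gap between $\gamma\floor*{l}$ and $k$, which may force a benign condition relating $\gamma,\rho$ to $k$; and (iii) the order of the two limits, since $\alpha$ must be fixed before letting $N\to\infty$ (the overlap bound $O(\beta^2/N)$ degrades as $\alpha$ grows). One cannot sidestep the cap by lower-bounding $|C|$ on $D$ by $\sum_j|B^j|$, because the cap only removes nodes; the argument must route through ``every spot is full'', which is why the Chernoff step is applied per spot rather than to $|C|$ directly.
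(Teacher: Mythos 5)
Your proof is correct and follows essentially the same route as the paper's: show that for $\alpha$ large every spot saturates its cap $\floor*{l}$ with high probability, so that $|C|=\gamma\floor*{l}\geq k$. The differences are ones of rigor rather than strategy, but they are worth noting. The paper writes $|C|$ as a sum of $\gamma$ independent identically distributed spot sizes, which tacitly assumes the spots are disjoint; you make this explicit via the event $D$ and the $O(\gamma^2\beta^2/N)$ Markov bound, which is also where the hypothesis ``$N$ large'' genuinely enters (in the paper it is used only to justify a distributional approximation). Likewise, the paper bounds the per-spot shortfall probability by approximating the binomial with a normal law, $P[Y<l]\simeq\Phi\bigl(\frac{l-\beta}{\sqrt{\beta(1-\beta/N)}}\bigr)$, and invoking monotonicity of $\Phi$, whereas you use a Chernoff lower-tail bound under the explicit condition $1+\alpha\geq 4\gamma^{1-\rho}$ together with a union bound over the $\gamma$ spots — a cleaner, non-asymptotic estimate of the same quantity. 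The caveats you flag are real and apply to the paper's argument as well: the integrality gap (for unlucky $k,\gamma,\rho$ one can have $\gamma\floor*{k/\gamma^{\rho}}<k$, in which case no choice of $\alpha$ helps, since the cap bounds $|C|$ from above) is hidden in the paper's final $\simeq$, and the order of quantifiers — fix $\alpha$ first, then let $N$ grow — is the correct reading of the theorem, since your overlap term degrades with $\beta$.
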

    
    This result is somewhat similar to what reported in Theorem 1
    of~\cite{pennino2021efficient}. It can be proved as follows.
\begin{proof}    
    Each honest member has a uniformly distributed KID which is independent of
    the random center $v^j$ of $C^j$.  Note that, $|C|$ is a sum of $\gamma$
    independent identically distributed random variables, one for each $C^j$.

    For each spot, the number of KIDs $n$ such that
    $d(n,v^j)<\frac{2^b}{N}\beta$ is  $ \frac{2^b}{N}\beta$, by
    Property~\ref{pr:count-at-distance}. The probability that a node is in the
    spot of $C^j$ is $p=\frac{\frac{2^b}{N}\beta}{2^b}=\frac{\beta}{N}$ and the
    number of nodes in the spot (without considering the limit $l$) is a random
    variable $Y$ with binomial distribution with expected value $\mu_Y=Np=\beta$
    and variance $\sigma^2_Y=N\frac{\beta}{N}(1-\frac{\beta}{N})$. Hence, the
    distribution of the random variable $|C^j|$ can be approximated with a
    ``trucated'' binomial distribution where the weight of the upper tail above
    $\floor*{l}$ is accumulated on the $\floor*{l}$ value.

	Following the theorem statement, we consider an increasing $\alpha$ parameter.
	When $\alpha$ increases also $\beta = \frac {k(1+\alpha)} {2\gamma} $
	increases. For large $N$, we can approximate the binomial distribution with
	normal distribution with same mean and variance. If we denote with
	$\Phi(\cdot)$ the standard normal cumulative distribution, $P[Y<l]\simeq
	\Phi\left(\frac{l-\beta}{\sqrt{\beta(1-\beta/N)}}\right)$. The argument of
	$\Phi(\cdot)$ decreases when $\beta$ increases, for $\beta \ll N$ , hence, by
	monotonicity of $\Phi(\cdot)$, $P[Y<l]$ decreases and tends to zero. This means
	that, if $\alpha$ is big enough, $|C|=\sum_{j=1}^{\gamma}
	|C^j| \simeq \sum_{j=1}^{\gamma} l = \gamma l = \gamma \floor*{k/\gamma^\rho}\simeq
	\gamma^{1-\rho} k$, which proves the theorem.
	\end{proof}

\section{Security analysis}\label{sec:security-analisys}

As explained Section~\ref{threat-model}, to perform any attack, the attacker must control a committee. We call \emph{bad committee}, a
committee where the attacker controls a number of members above a certain
threshold $\bar{k}$. We call \emph{bad nodes} the nodes controlled by the attacker.

We adopt the following notation.
\begin{itemize}
    \item $C$ is a pseudo-random committee selected with the approach described in Section~\ref{selection-procedure}.
    \item $U$ is the set of \emph{potential} members for $C$.
    \item $N=|U|$ is the cardinality of $U$. 
    \item $A$ is the set of the bad nodes in $U$.
    \item $\tilde{A} = C \cap A$ is the set of bad nodes that occurred to be in $C$.
    \item $\tilde{k}=|\tilde{A}|$ is the cardinality of $\tilde{A}$.
    \item $\bar{k}$ is the number of honest members that should agree to achieve consensus according to a certain consensus algorithm, with $\bar{k}< k = |C|$.
\end{itemize}

We formalize the security of our approach in the following theorem.

\begin{theorem}[Security]

For the selection committee procedure, described in Section~\ref{selection-procedure},  the
probability to obtain a bad committee tends to zero for increasing $N$ 
under the following conditions: 
\begin{enumerate}
\item the attacker that can control a slowly varying set of nodes of constant size,
\item the number of spots per committee $\gamma$ is $\Omega(\log N)$,
\item the number of committees is $O(N)$
\end{enumerate}

\end{theorem}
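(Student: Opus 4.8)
The plan is to prove the statement in two stages: first, an exponential‑in‑$\gamma$ tail bound on the probability that a single fixed committee is bad, obtained by combining the per‑spot cap with the independence of the $\gamma$ centers; and then a union bound over the $O(N)$ committees (and, if desired, over a polynomially bounded number of rounds), where the hypothesis $\gamma=\Omega(\log N)$ is exactly what makes the $O(N)$ factor harmless.

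For the first stage I would fix the adversary's node set $A$, $|A|=m$. By the slowly‑adaptive assumption (condition~1), $A$ is fixed before $R_{r-e}$ is revealed, so — modelling $\mathrm{hash}$ as a random oracle — the centers $v^j=\mathrm{hash}(R_{r-e},i,j)$ are mutually independent, uniform in the KID space, and independent of the fixed KIDs of the nodes of $A$. Let $X_j=|A\cap C^j|$. Two elementary facts suffice: $X_j\le\floor*{l}$ always (the cap of Section~\ref{selection-procedure}), and for every fixed node $a$, $P[a\in C^j]\le\beta/N$, because the pre‑cap spot of $v^j$ contains exactly $\frac{2^b}{N}\beta$ KIDs (Property~\ref{pr:count-at-distance}) and the cap only removes nodes; hence $E[X_j]\le m\beta/N=:q$. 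Since $\tilde{k}=|\tilde{A}|\le\sum_{j=1}^{\gamma}X_j$ with each term at most $\floor*{l}$, the event ``$C$ is bad'' (i.e. $\tilde{k}\ge\bar{k}$) forces at least $\lceil\bar{k}/\floor*{l}\rceil$ of the spots to satisfy $X_j\ge1$; in the relevant regime — the robustness analysis of Section~\ref{sec:robustness-analysis} gives $|C|=\Theta(\gamma\floor*{l})$, and for a BFT‑style committee $\bar{k}$ is a constant fraction of $|C|$ — this threshold is $c\gamma$ for a constant $c\in(0,1)$.

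Because the events $\{X_j\ge1\}$ are independent across $j$ and each has probability at most $q$, the number of ``hit'' spots is stochastically dominated by a $\mathrm{Bin}(\gamma,q)$ variable, and a binomial‑tail estimate gives
\[
P[C\text{ is bad}]\;\le\;P[\mathrm{Bin}(\gamma,q)\ge c\gamma]\;\le\;\Big(\tfrac{e}{c}\,q\Big)^{c\gamma}.
\]
Since $q=\frac{m\,k(1+\alpha)}{2\gamma N}\to0$ as $N\to\infty$ (with $m$, $k$, $\alpha$ fixed), the base eventually drops below $e^{-1}$, so $P[C\text{ is bad}]\le e^{-\Omega(\gamma)}$. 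A union bound over the $K=O(N)$ committees (condition~3) then gives $P[\exists\text{ bad committee}]\le c_1 N\,e^{-c_2\gamma}$ for constants $c_1,c_2>0$, which tends to $0$ once $\gamma=\Omega(\log N)$ with a large enough hidden constant (condition~2); the same bound survives a further union over any $\mathrm{poly}(N)$ number of rounds.

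I expect the real obstacle to be the reduction ``bad committee $\Rightarrow$ at least $c\gamma$ hit spots'', i.e. turning the informal claim of Section~\ref{selection-procedure} — that the cap $\floor*{l}$ prevents funnelling all of $A$ into one lucky spot — into a clean inequality. Concretely this needs (a) extracting from the robustness analysis that $|C|\approx\gamma\floor*{l}$ and that $\bar{k}$ is a fixed fraction of it, which imposes a mild joint constraint on $k$, $\gamma$ and $\rho$ (in particular $\floor*{l}\ge1$); and (b) arguing that even an adversary who grinds public keys to pack all $m$ KIDs into an arbitrarily small ball cannot beat this bottleneck — the ball still contains $m$ KIDs, so a single center covers it only with probability $O(m/N)$, and clearing the $c\gamma$‑hit barrier still requires $\Omega(\gamma)$ independent rare events. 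The remaining ingredients — independence of the $v^j$, the marginal inclusion probability $\beta/N$, and the binomial tail — are routine.
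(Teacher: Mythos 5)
Your proof is correct (under the same random-oracle/independence idealizations the paper itself relies on) and shares the overall skeleton of the paper's argument: per-spot cap, independence of the $\gamma$ centers, a binomial tail bound, and a union bound over the $O(N)$ committees with $\gamma=\Omega(\log N)$ absorbing the extra factor $N$. Where you genuinely diverge is at the key step. The paper fixes one adversarial placement (clusters of $\floor*{l}$ mutually close KIDs spread evenly over the KID space), asserts only ``intuitively'' that it is optimal, and then Chernoff-bounds the probability that at least $\bar{z}=\ceil*{\bar{k}/\floor*{l}}$ spots each swallow an entire cluster. You instead bound an \emph{arbitrary} placement of the $m$ bad KIDs: the marginal $P[a\in C^j]\le\beta/N$ union-bounded to $q=m\beta/N$ per spot, plus the pigeonhole observation that a bad committee forces at least $\ceil*{\bar{k}/\floor*{l}}$ spots to be hit, gives an upper bound valid against every strategy, so you never need the unproven optimality claim --- that is the main thing your route buys, and it is arguably cleaner than the paper's. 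One mismatch worth flagging: you obtain a threshold of $c\gamma$ hit spots by letting $\bar{k}$ scale as a constant fraction of the realized size $|C|=\Theta(\gamma\floor*{l})$, whereas the paper (and its experiments, $\bar{k}=2k/3$ with $k$ fixed) treats $\bar{k}/k$ as the constant, so its threshold is only $\bar{z}\simeq\gamma^{\rho}\bar{k}/k=\Theta(\gamma^{\rho})$; under that reading your inequality $P[\mathrm{Bin}(\gamma,q)\ge c\gamma]\le(eq/c)^{c\gamma}$ must be replaced by $P[\mathrm{Bin}(\gamma,q)\ge \bar z]\le\binom{\gamma}{\bar z}q^{\bar z}\le\left(e\gamma q/\bar z\right)^{\bar z}$, which still vanishes after the union bound because $\gamma q=\frac{mk(1+\alpha)}{2N}=O(1/N)$, so your method survives either interpretation. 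That same observation ($q$ decaying polynomially in $N$, not just to zero) also removes the need for your ``large enough hidden constant'' caveat in the final union bound.
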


\begin{proof} 
Let $m$ be the number of potential committee members that an attacker can
control, i.e., $m=|A|$. Any
committee $C=\bigcup_j C^j$ is fixed and reachable from any node when random seed $r$ is known. The attacker choose $A$ by choosing the KID of each bad
node. The attacker is free of placing them anywhere in the KID space. The slowly varying hypothesis,
allows the attacker to make any choice of $A$, but not to adapt $A$ once $r$ is
known. The attack is successful when a bad committee is selected and this occurs
when $ \tilde{k} \geq \bar{k}$, where $\tilde{k}= \sum_j \tilde{k}^j$ and $\tilde{k}^j$ is the number of bad nodes in spot $C^j$ that are taken as part of $C$ (remember that $|C^j| \leq l$ by construction, see Section~\ref{selection-procedure}). 

The minimum effort for the attacker to obtain a bad committee is to arrange
things such that $\tilde{k}=\bar{k}$ bad nodes are selected in $C$, since no
success is obtained with fewer bad nodes in $C$. Since the selection procedure
of $C$ impose $\tilde{k}^j \leq l$, we consider the following attacker strategy.
The attacker creates \emph{clusters} of $\floor*{l}$ bad nodes that are very
close together (according to the xor metric) and spread these clusters randomly,
but evenly, in the KID space. Intuitively this is an optimal strategy. In fact,
for reasonable parameters ($\bar{k}< m \ll N$), if one of the nodes of a cluster
$\chi$ is in $C^j$ the probability that the whole $\chi$ is in $C^j$ is close to
1. Spreading the bad nodes in $\chi$ decreases this probability and hence the
probability of a successful attack. Increasing the size of $\chi$ does not
provide any advantage, but reduces the nodes available for other clusters.
Decreasing the size of $\chi$ makes the attack harder since more clusters are
needed to reach $ \tilde{k} \geq \bar{k}$.

According to this strategy, with $m$ bad nodes, an attacker can create $z=
\floor*{ \frac{m}{\floor*{l}}}$ clusters of size $\floor*{l}$.

The attack is successful if there are $\bar{z}=\ceil* {\frac{\bar{k}}{\floor*{l}}} \simeq \gamma^\rho \bar{k}/k$ clusters $\chi$ such that, for each $n \in \chi$,
$d(n,v^j)<\frac{2^b}{N} \beta$. By Property~\ref{pr:count-at-distance}, there
are $ \floor*{\frac{2^b}{N} \beta }- \floor*{l}$ values of $v^j$ for which
$\chi$ is entirely contained in spot $C^j$.
Hence, the probability that a specific spot contains one of the clusters is $p=
z \frac{\floor*{2^b\beta/N} - \floor*{l}}{2^b} $ $\simeq \frac{z \beta}{N} $
$= \frac{ \floor*{\frac{m}{\floor{l}}} \beta}{N}$
$\simeq \frac{m(1+\alpha)}{ 2N\gamma^{1-\rho} }$.

Since spots are independently selected. The number $X$ of occurrences of this
event across all spots $C^j$, with $j=1\dots\gamma$, is a binomially distributed
random variable with mean $\mu=\gamma p$. 

To prove the theorem statement, note that the attacker is attacking the creation of 
$O(N)$ committees at the same time. Since each committee is created independently, 
the total probability of a successful attack is $\Pi \sim N \cdot P[X > \bar{z}]$.

The second factor can be bound by applying the Chernoff bound $P[X> (1+\delta)
\mu] <  \left( \frac{e^\delta}{(1+\delta)^{1+\delta}}  \right)^\mu$ with
$(1+\delta) \mu = \bar{z}$. 
Performing substitutions, and posing $q=\frac{2\bar{k}}{m k(1+\alpha)}$, we obtain
$N\cdot P[X> \bar{z}]< N \left( \frac{e^{qN-1}}{(qN)^{qN}} 
\right)^{\frac{m(1+\alpha) \gamma^\rho}{2N}} $. Since for the theorem statement
$\gamma = \log N $, we substitute $N=e^\gamma$, obtaining $e^\gamma \left(
\frac{e^{qe^\gamma-1}}{(qe^\gamma)^{qe^\gamma}}  \right)^{\frac{m(1+\alpha)
\gamma^\rho}{2e^\gamma}} $ $\sim e^\gamma \left(
\frac{e^{qe^\gamma}}{(qe^\gamma)^{qe^\gamma}}  \right)^{\frac{m(1+\alpha)
\gamma^\rho}{2e^\gamma}}$ $= e^\gamma \left(
\frac{e}{qe^\gamma}  \right)^{s\gamma^\rho}$, where $s=\frac{m(1+\alpha) q}{2}$.
Rearranging, we obtain
$=\frac{e^{s\gamma^\rho+\gamma - s\gamma^{\rho+1}}}{q^{s\gamma^\rho}} $
$\sim \frac{e^{- s\gamma^{\rho+1}}}{q^{s\gamma^\rho}}$ $= \frac{1}{(q e^\gamma)^{s\gamma^\rho}} \rightarrow 0$, for $\gamma \rightarrow \infty$ (and hence $N\rightarrow\infty$), which proves
the statement.
\end{proof}

In Section~\ref{sec:experiments}, we experimentally show that the probability of having a round with 
bad committee quickly tends to zero even for a small number of spots.

\section{Experiments}\label{sec:experiments}

We performed some initial experiments to assess how the probability of a successful attack behaves 
when varying the number of nodes $N$ and the number of  spots per committee $\gamma$. 

We simulated committee selections in a universe of nodes where the adversary can
control a number of nodes much larger than the typical size of the committee. 
We show the results
of our experiments in Table~\ref{tbl:results}. We performed our experiments with
the following parameters. Number of nodes $N$ equal to $1000$, $3000$ and $10000$.
Number of committees $K=N/(2k)$ of desired size $k=20$ (i.e., affordable for BFT
algorithms), which means about half of the nodes are involved in committees at each round. Number of malicious nodes $m=200$. Number of spots per committee $\gamma$ from 1 to 5.
Additional parameters are $\alpha=3$, $\bar{k}=2k/3 $, $\rho=0.9$. Probability
estimation is performed by executing $N\gamma/10$ tests for each cell of the table. Each cell shows the probability of having a round of $K$ committees with at least
one committee under the control of the attacker.

We have found that in practice the probability of a successful attack quickly
decreases even with a small number of spots ($\gamma$), which makes feasible the adoption of the 
communication protocol shown in Section~\ref{sec:solution} .
We note that for $\gamma=1$ probability of a successful attack is quite high. 
Further, probability quickly decreases by increasing $N$ for $\gamma \geq 2$.
Note that, the reported numbers are probabilities estimated by repeatedly performing committee selection and checking for a successful attack (with the parameter given above). For this reason, the two zeros in the rightmost column should be interpreted as very low probabilities, with respect to the precision of the estimation.

\begin{table}
\centering
\begin{tabular}{|c|*{5}{c|}}
      \hline
      \diagbox[width=1cm, height=0.6cm]{ $N$ }{$\gamma$ }
                   & $1$ & $2$ & $3$ & $4$ & $5$ \\
      \hline
    $1000$ &$0.8$ & $0.25$& $0.08$ &  $0.035$& $0.006$  \\
    \hline
      $3000$  & $0.81$ & $0.081$ & $0.03$ & $0.0016$  & $0.0$ \\
    \hline
          $10000$ & $0.853$ & $0.026 $ &  $0.0053$ & $0.001$  & $0.0$ \\
      \hline

    \end{tabular}
\vspace{0.2cm}

\caption{Experiments results. The probability of a successful attack tends quickly to zero even 
for small values of $\gamma$.}
\label{tbl:results}
\end{table}

\section{Conclusions and Future Work}\label{sec:conclusions}

In this paper, we proposed a committee selection procedure and an associated
communication procedure, that enable a scalable and efficient committee-targeted
communication to be used in a blockchain architecture. In particular, the selection procedure has no negative effects on
decentralization, in fact the members of each committee are selected in a statistically
unbiased way. Further, the whole proposal is designed so that committees can be frequently changed, so that all the nodes can
statistically cooperate in the creation of a fraction of new blocks.

Regarding the communication procedure, we proposed a routing based on the
kademlia algorithm, in which messages are targeted to the centers of
agglomerates of nodes, called spots, that are in the committee by construction,
reach one node of the spot in $O(\log N)$ time, where $N$ is the number of
nodes, and then are gossiped within the spot (in constant time since the spot
has constant size).

We formally proved  the security of the committee selection procedure against a
slowly adaptive adversary and we experimentally proved that the probability of a successful attack
quickly tends to zero even for a handful of spots.

Our approach enable blockchain scalability even in the presence of rapidly
changing committees and is compatible with models in which nodes have limited
resources and scalability should be achieved exclusively by increasing the
amount of nodes. In fact, messages need not to be transmitted in broadcast but
are multicasted to fresh new committees without the need to change any routing
table in the nodes. This makes our approach ideal for sending pending
transactions to the right committee, for sharding-based architectures, like
in~\cite{zamani2018rapidchain}, or to send partial result to another committee,
for pipeline-based architectures, like in~\cite{pennino2021efficient}.

As a future work, we plan to implement and perform realistic experiments with the proposed approach to verify and measure its efficiency in practice. We plan to do that in the context of experimenting the pipeline-based architecture described in~\cite{pennino2021efficient}.

\newcommand{\etalchar}[1]{$^{#1}$}


\begin{thebibliography}{KKJG{\etalchar{+}}18}

\bibitem[CM19]{chen2019algorand}
Jing Chen and Silvio Micali.
\newblock Algorand: A secure and efficient distributed ledger.
\newblock {\em Theoretical Computer Science}, 2019.

\bibitem[DDL{\etalchar{+}}19]{dang2019towards}
Hung Dang, Tien Tuan~Anh Dinh, Dumitrel Loghin, Ee-Chien Chang, Qian Lin, and
  Beng~Chin Ooi.
\newblock Towards scaling blockchain systems via sharding.
\newblock In {\em Proceedings of the 2019 International Conference on
  Management of Data}, pages 123--140. ACM, 2019.

\bibitem[EOS]{EOSIO14:online}
EOSIO.
\newblock Eosio:documentation/technicalwhitepaper · github.
\newblock
  \url{https://github.com/EOSIO/Documentation/blob/master/TechnicalWhitePaper.md}.
\newblock (Accessed on 21/11/2022).

\bibitem[Fel87]{feldman1987practical}
Paul Feldman.
\newblock A practical scheme for non-interactive verifiable secret sharing.
\newblock In {\em 28th Annual Symposium on Foundations of Computer Science
  (sfcs 1987)}, pages 427--438. IEEE, 1987.

\bibitem[HSZ23]{hashim2023sharding}
Faiza Hashim, Khaled Shuaib, and Nazar Zaki.
\newblock Sharding for scalable blockchain networks.
\newblock {\em SN Computer Science}, 4(1):1--17, 2023.

\bibitem[KKJG{\etalchar{+}}18]{kokoris2018omniledger}
Eleftherios Kokoris-Kogias, Philipp Jovanovic, Linus Gasser, Nicolas Gailly,
  Ewa Syta, and Bryan Ford.
\newblock Omniledger: A secure, scale-out, decentralized ledger via sharding.
\newblock In {\em 2018 IEEE Symposium on Security and Privacy (SP)}, pages
  583--598. IEEE, 2018.

\bibitem[KRDO17]{kiayias2017ouroboros}
Aggelos Kiayias, Alexander Russell, Bernardo David, and Roman Oliynykov.
\newblock Ouroboros: A provably secure proof-of-stake blockchain protocol.
\newblock In {\em Annual International Cryptology Conference}, pages 357--388.
  Springer, 2017.

\bibitem[LNZ{\etalchar{+}}16]{luu2016secure}
Loi Luu, Viswesh Narayanan, Chaodong Zheng, Kunal Baweja, Seth Gilbert, and
  Prateek Saxena.
\newblock A secure sharding protocol for open blockchains.
\newblock In {\em Proceedings of the 2016 ACM SIGSAC conference on computer and
  communications security}, pages 17--30, 2016.

\bibitem[MM02]{maymounkov2002kademlia}
Petar Maymounkov and David Mazieres.
\newblock Kademlia: A peer-to-peer information system based on the xor metric.
\newblock In {\em International Workshop on Peer-to-Peer Systems}, pages
  53--65. Springer, 2002.

\bibitem[MPP20]{DBLP:journals/corr/abs-2005-06665}
Gianmaria~Del Monte, Diego Pennino, and Maurizio Pizzonia.
\newblock Scaling blockchains without giving up decentralization and security.
\newblock {\em CoRR}, abs/2005.06665, 2020.

\bibitem[MXZ{\etalchar{+}}17]{8123011}
Du~Mingxiao, Ma~Xiaofeng, Zhang Zhe, Wang Xiangwei, and Chen Qijun.
\newblock A review on consensus algorithm of blockchain.
\newblock In {\em 2017 IEEE International Conference on Systems, Man, and
  Cybernetics (SMC)}, pages 2567--2572, 2017.

\bibitem[OTJA21]{oyinloye2021blockchain}
Damilare~Peter Oyinloye, Je~Sen Teh, Norziana Jamil, and Moatsum Alawida.
\newblock Blockchain consensus: An overview of alternative protocols.
\newblock {\em Symmetry}, 13(8):1363, 2021.

\bibitem[PPVZ21]{pennino2021efficient}
Diego Pennino, Maurizio Pizzonia, Andrea Vitaletti, and Marco Zecchini.
\newblock Efficient certification of endpoint control on blockchain.
\newblock {\em IEEE Access}, 9:133309--133334, 2021.

\bibitem[UPS11]{urdaneta2011survey}
Guido Urdaneta, Guillaume Pierre, and Maarten~van Steen.
\newblock A survey of dht security techniques.
\newblock {\em ACM Computing Surveys (CSUR)}, 43(2):1--49, 2011.

\bibitem[WSNH19]{wang2019sok}
Gang Wang, Zhijie~Jerry Shi, Mark Nixon, and Song Han.
\newblock Sok: Sharding on blockchain.
\newblock In {\em Proceedings of the 1st ACM Conference on Advances in
  Financial Technologies}, pages 41--61, 2019.

\bibitem[XZLH20]{8972381}
Yang Xiao, Ning Zhang, Wenjing Lou, and Y.~Thomas Hou.
\newblock A survey of distributed consensus protocols for blockchain networks.
\newblock {\em IEEE Communications Surveys \& Tutorials}, 22(2):1432--1465,
  2020.

\bibitem[YWY{\etalchar{+}}20]{8954616}
Guangsheng Yu, Xu~Wang, Kan Yu, Wei Ni, J.~Andrew Zhang, and Ren~Ping Liu.
\newblock Survey: Sharding in blockchains.
\newblock {\em IEEE Access}, 8:14155--14181, 2020.

\bibitem[ZMR18]{zamani2018rapidchain}
Mahdi Zamani, Mahnush Movahedi, and Mariana Raykova.
\newblock Rapidchain: Scaling blockchain via full sharding.
\newblock In {\em Proceedings of the 2018 ACM SIGSAC Conference on Computer and
  Communications Security}, pages 931--948. ACM, 2018.

\end{thebibliography}
\end{document}